\DeclareMathOperator*{\argmax}{argmax}
\newtheorem{theorem}{Theorem}
\acrodef{BS}{base station}
\acrodef{RA}{receive antenna}
\acrodef{PA}{predictor antenna}
\acrodef{IID}{independent and identically distributed}
\acrodef{CDF}{cumulative distribution function}
\acrodef{PDF}{probability density function}
\acrodef{cu}{channel use}
\acrodef{ACK}{acknowledgement}
\acrodef{NACK}{negative acknowledgement}
\acrodef{SNR}{signal-to-noise ratio}
\acrodef{HARQ}{hybrid automatic repeat request}
\acrodef{CSIT}{channel state information at the transmitter side}
\acrodef{INR}{incremental redundancy}
\acrodef{npcu}{nats-per-channel-use}
\acrodef{MIMO}{multiple input multiple output }
\acrodef{FDD}{frequency division duplex}
\acrodef{E2E}{end-to-end}
\acrodef{IAB}{integrated access and backhaul}
\acrodef{3GPP}{3rd Generation Partnership Project}
\begin{document}
\captionsetup{belowskip=0pt,aboveskip=0pt}

\title{Predictor Antennas for Moving Relays: \\Finite Block-length Analysis}
\author{\IEEEauthorblockN{ Hao Guo}
\IEEEauthorblockA{\textit{Department of Electrical Engineering} \\
\textit{Chalmers University of Technology}\\
Gothenburg, Sweden \\
hao.guo@chalmers.se}
\and
\IEEEauthorblockN{ Behrooz Makki}
\IEEEauthorblockA{\textit{Department of Ericsson Research} \\
\textit{Ericsson Research}\\
Gothenburg, Sweden\\
behrooz.makki@ericsson.com}
\and
\IEEEauthorblockN{ Tommy Svensson}
\IEEEauthorblockA{\textit{Department of Electrical Engineering} \\
\textit{Chalmers University of Technology}\\
Gothenburg, Sweden\\
tommy.svensson@chalmers.se}
}

% \author{Hao~Guo,
%         Behrooz~Makki,~\IEEEmembership{Senior~Member,~IEEE},
%         and Tommy~Svensson,~\IEEEmembership{Senior~Member,~IEEE}% <-this % stops a space
% \thanks{H. Guo and T. Svensson are with the Department of Electrical Engineering, Chalmers University of Technology, 41296 Gothenburg, Sweden (email: hao.guo@chalmers.se; tommy.svensson@chalmers.se).}% <-this % stops a space
% \thanks{B. Makki is with Ericsson Research, 41756 Gothenburg, Sweden (email: behrooz.makki@ericsson.com).}}% <-this % stops a space
%\thanks{M.-S. Alouini is with the King Abdullah University of Science and Technology,
%Thuwal 23955-6900, Saudi Arabia (e-mail: %slim.alouini@kaust.edu.sa).}}
%\thanks{Manuscript received April 19, 2005; revised January 11, 2007.}}

\maketitle

\begin{abstract}
In future wireless networks, we anticipate that a large number of devices will connect to mobile networks through moving relays installed on vehicles, in particular in public transport vehicles. To provide high-speed moving relays with accurate channel state information different methods have been proposed, among which predictor antenna (PA) is one of the promising ones. Here, the PA system refers to a setup where two sets of antennas are deployed on top of a vehicle, and the front antenna(s) can be used to predict the channel state information for the antenna(s) behind. In this paper, we study the delay-limited performance of PA systems using adaptive rate allocations. We use the fundamental results on the achievable rate of finite block-length codes to study the system throughput and error probability in the presence of short packets. Particularly, we derive closed-form expressions for the error probability, the average transmit rate as well as the optimal rate allocation, and study the effect of different parameters on the performance of PA systems. The results indicate that rate adaptation under finite block-length codewords can improve the performance of the PA system with spatial mismatch. 
\end{abstract}

\begin{IEEEkeywords}
6G, channel estimation, finite block-length analysis, integrated access and backhaul (IAB), moving relay, predictor antenna, relay, throughput, vehicle-to-everything (V2X), wireless backhaul.
\end{IEEEkeywords}

\section{Introduction}
Moving relay, as a complement to fixed relays, has been shown to be a promising way to serve mobile users \cite{yutao2013moving}. One of the key challenges for moving relays is, however, how to obtain accurate \ac{CSIT} since the relay terminal is often under mobility. To overcome this issue, the concept of \ac{PA} has been proposed in \cite{Sternad2012WCNCWusing} where two groups of antennas are placed on top of a vehicle: the 
\ac{PA} positioned in the front is used to predict the \ac{CSIT} for the antennas behind it, referred to as \ac{RA}, who will later experience the similar environment as the \ac{PA} (see Fig. \ref{fig_mismatch}). Different testbed based studies have proved the validation of such concept embedded with technologies such as \ac{MIMO}  \cite{DT2015ITSMmaking,phan2018WSAadaptive} and Kalman smoothing \cite{Apelfrojd2018PIMRCkalman}. Particularly, these field trials  indicate that the \ac{PA} system can obtain \ac{CSIT}  with high accuracy also at high speed. Such \ac{CSIT} is key for enabling closed loop transmission techniques such as link adaptation, multi-user scheduling, advanced \ac{MIMO} schemes and interference coordination also at high speed. Thus, the PA concept is a promising  candidate for mobile relays in future generation of wireless networks.  This is important specially because, with the recent \ac{3GPP} standardization progress in \ac{IAB} networks \cite{madapatha2020integrated}, it is not unlikely that mobile \ac{IAB}, as an efficient method of relaying, will be considered in the future \ac{3GPP} releases.

If the \ac{RA} reaches the same point as the place where the \ac{PA} was estimating the channel in the previous time slots, the \ac{CSIT} will be perfect. However, due to, e.g., varying vehicle speed or the processing delay at the \ac{BS}, the \ac{RA} may not reach the ideal point. As a result, the acquired \ac{CSIT} may be inaccurate, which will affect the system performance correspondingly. 

Such spatial mismatch can be compensated by channel interpolation with dense time samples at the cost of computational complexity  and reduced end-to-end throughput \cite{Apelfrojd2018PIMRCkalman,BJ2017PIMRCpredictor}. Alternatively, one can utilize the spatial correlations  based on imperfect \ac{CSIT}. This idea has been exploited in \cite{Guo2019WCLrate,guo2020power,guo2020semilinear,Guo2020rate}, where the spatial correlations have been utilized to optimize the transmit power \cite{guo2020power} and rate \cite{Guo2019WCLrate,guo2020semilinear,Guo2020rate}. Moreover, in \cite{guo2020power,Guo2020rate} the \ac{PA} is not only used for channel estimation, but also for data transmission, and adaptive power/rate allocation combined with \ac{HARQ} shows notable performance improvement. However,  the analysis in \cite{Guo2019WCLrate,guo2020power,guo2020semilinear,Guo2020rate} assumes infinite codeword length, which may not always be a valid assumption in vehicular networks,  where the codewords are designed to be short such that the Shannon’s capacity formula does not give an accurate approximation of the achievable rate \cite{polyanskiy2010channel}. Specially, as reported in, e.g., \cite{bilstrup2008evaluation},  in vehicle communications the codeword length is in the order of several hundred symbols. Hence, it is interesting to investigate the system performance for the cases with short packets.

\begin{figure}
    \centering
    \includegraphics[width = 1\columnwidth]{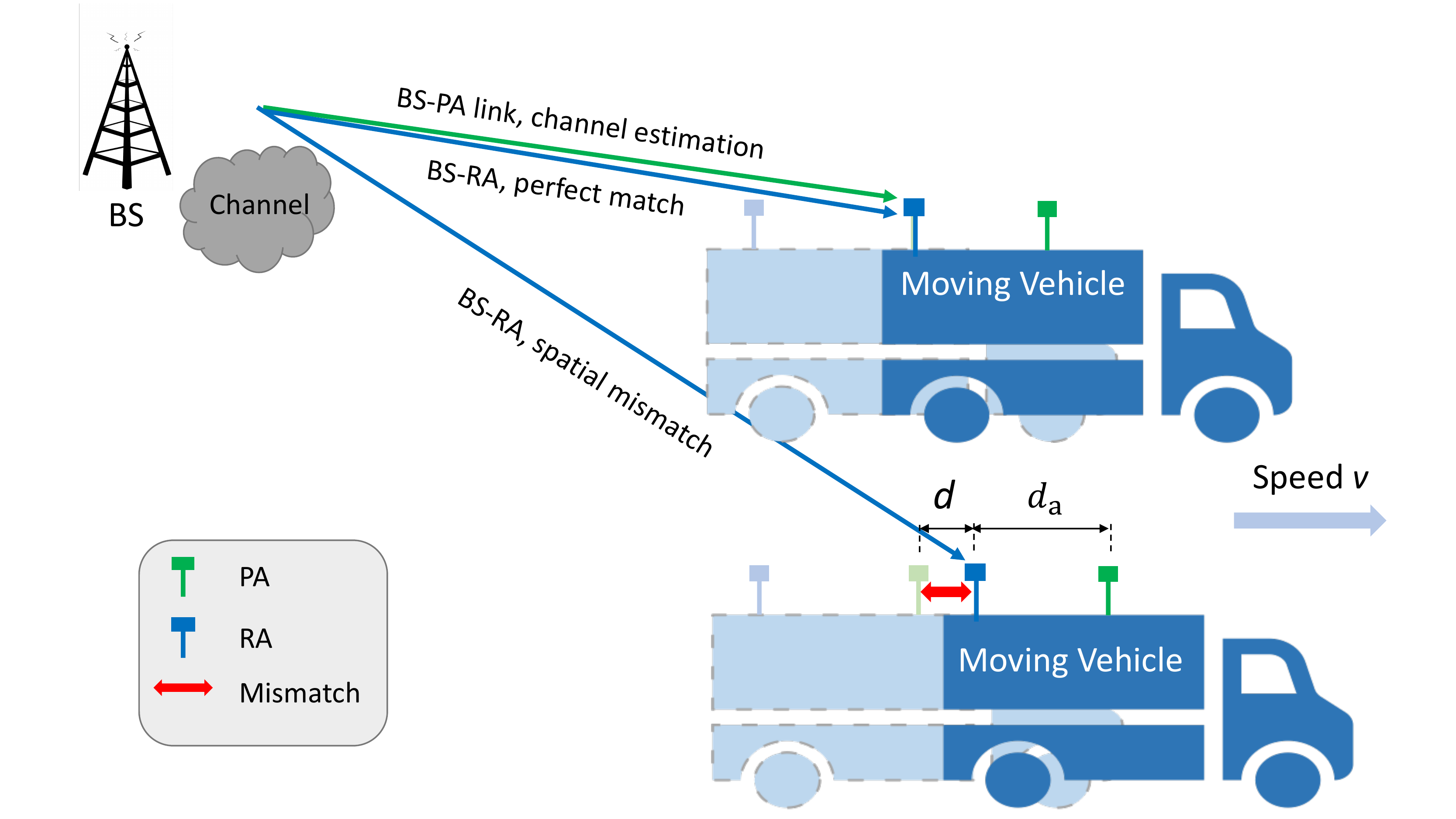}
    \caption{PA system with spatial mismatch.}
    \label{fig_mismatch}
\end{figure}

In this paper, we analysis the data transmission efficiency of the \ac{PA} systems with spatial mismatch utilizing adaptive rate allocation. We apply the fundamental results of \cite{polyanskiy2010channel,Makki2014WCLfinite} on the error probability of finite block-length codes to investigate the delay-limited throughput and error probability of the \ac{PA} systems. Particularly, considering codewords of finite length, we determine closed-form expressions for the error probability, the system throughput as well as the appropriate rate allocation maximizing the throughout. Finally, we investigate the effect of different parameters such as the antennas distance and the vehicle speed on the performance of \ac{PA} systems.  In this way, considering short packets leads to completely different analytical results compared to \cite{Sternad2012WCNCWusing,DT2015ITSMmaking,phan2018WSAadaptive,Apelfrojd2018PIMRCkalman,BJ2017PIMRCpredictor,Guo2019WCLrate,guo2020power,guo2020semilinear,Guo2020rate}, and makes it possible to study the effect of the codeword length on the system performance.

The analytical and simulation results show that using finite-length codewords and rate adaptation there is potential for improving the performance of the \ac{PA} system with spatial mismatch. Also, while the system throughput and error probability are sensitive to the length of short packets, their sensitivity decreases as the codeword length increases.

\section{System Model}
Consider a PA setup  with one \ac{PA} in the front of the vehicle  and one \ac{RA} aligned behind the \ac{PA}. The \ac{PA}   is used for channel estimation for the \ac{BS}-\ac{RA} link. We assume a \ac{FDD}-based setup where, at $t_1$ the \ac{BS} sends pilots to the \ac{PA}. Then, the \ac{PA} estimates the \ac{BS}-\ac{PA} channel $\hat{h}$ and sends it back to the \ac{BS}. Here, we assume the feedback link from the PA to the BS to be perfect. This is motivated by the fact that, compared to the direct link, considerably lower rate is required in the feedback channel \cite{makki2013TCfeedback}. However, one can follow the same scheme as in \cite{guo2020semilinear} to model the effect of \ac{CSIT} quantization on the system performance (see Section III). Finally, at $t_3$ the \ac{BS} sends the data to the \ac{RA} using the channel information from the \ac{PA}, and the signal received by the \ac{RA} can be expressed as
\begin{align}
    y = hx + n.
\end{align}
Here, $h$ is the \ac{BS}-\ac{RA} channel and $n$ represents additive Gaussian noise with zero mean and unit variance. Ideally, $h = \hat{h}$ if  the \ac{RA} reaches the same  point as the \ac{PA} when it was sending pilots. However, as demonstrated in Fig. \ref{fig_mismatch}, if the \ac{RA} does not reach the  position where the \ac{PA} estimated the channel, due to, e.g., the \ac{BS} processing time is not equal to the time we need until the RA reaches the same point as the PA sending pilots, spatial mismatch may happen such that $h \neq \hat{h}$. Nevertheless, $h$ and $\hat{h}$ are spatially correlated. Using the Jake's model with uniform scattering \cite{Shin2003TITcapacity}, the correlation between $h$ and $\hat{h}$ can be modeled as \cite{Guo2019WCLrate}
\begin{align}\label{eq_hhat}
    \bigl[ \begin{smallmatrix}
  \hat{h}\\h
\end{smallmatrix} \bigr] = \bm{\Phi}^{1/2}  \bigl[ \begin{smallmatrix}
  p\\q
\end{smallmatrix} \bigr].
\end{align}
Here, $p, q\sim \mathcal{CN}(0,1)$ which are independent of $\hat{h}$ and $h$. Also, $\bm{\Phi}^{1/2}$ is a $2\times2$ correlation matrix with the ($i,j$)-th entity given by
\begin{align}\label{eq_phi}
    \Phi_{i,j} = J_0\left((i-j)\cdot2\pi d/ \lambda\right),
\end{align}
where $J_0(x) = \sum\limits_{m=0}^{\infty} \frac{(-1)^{m}}{m!\Gamma(m+1)}\left(\frac{x}{2}\right)^{2m}$ is the zeroth-order Bessel function of the first kind with $\Gamma(z) = \int_0^{\infty} x^{z-1}e^{-x} \mathrm{d}x$ representing the Gamma function, and $\lambda=c/f_\text{c}$ represents the wavelength  with $c$ being the speed of light and $f_\text{c}$ being the carrier frequency. Moreover, as illustrated in Fig. \ref{fig_mismatch}, $d$ in (\ref{eq_phi}) represents the mismatch distance 
\begin{align}\label{eq_d}
    d = |d_\text{a} - d_\text{m} | = |d_\text{a} - v\delta|,
\end{align}
which is  the difference between the antenna separation $d_\text{a}$ between the \ac{PA} and the \ac{RA} and the moving distance $d_\text{m}=v\delta$ where $v$ represents the vehicle speed. Also, $\delta = t_2-t_1$ denotes the processing delay at the \ac{BS}. 

Then, plugging (\ref{eq_phi}) and (\ref{eq_d}) into (\ref{eq_hhat}), $h$ can be expressed as a function of $\hat{h}$ following 
\begin{align}\label{eq_h}
    h = \sqrt{1-\sigma^2} \hat{h} + \sigma q.
\end{align}
Here, $\sigma$ is the (1,2)-th entity of $\bm{\Phi}^{1/2}$ with normalization and it is a function of the mismatch distance $d$ and carrier wavelength $\lambda$. In this way,  for a given $\hat{h}$ and $\sigma \neq 0$, $|h|$ follows a Rician distribution, i.e., the  \ac{PDF} of $|h|$ is given by 
\begin{align}
    f_{|h|\big|\hat{g}}(x) = \frac{2x}{\sigma^2}e^{-\frac{x^2+(1-\sigma^2)\hat{g}}{\sigma^2}}I_0\left(\frac{2x\sqrt{(1-\sigma^2)\hat{g}}}{\sigma^2}\right),
\end{align}
where $\hat{g} = |\hat{h}|^2$. Let us define the channel gain between BS-RA as $ g = |{h}|^2$. Then, the \ac{PDF} of $g|\hat{g}$ is given by
\begin{align}\label{eq_pdf}
    f_{g|\hat{g}}(x) = \frac{1}{\sigma^2}e^{-\frac{x+(1-\sigma^2)\hat{g}}{\sigma^2}}I_0\left(\frac{2\sqrt{x(1-\sigma^2)\hat{g}}}{\sigma^2}\right),
\end{align}
which is non-central Chi-squared distributed with the \ac{CDF} 
\begin{align}\label{eq_cdf}
    F_{g|\hat{g}}(x) = 1 - \mathcal{Q}_1\left( \sqrt{\frac{2(1-\sigma^2)\hat{g}}{\sigma^2}}, \sqrt{\frac{2x}{\sigma^2}}  \right).
\end{align}
Here, $ \mathcal{Q}_1(s,\rho) = \int_{\rho}^{\infty} xe^{-\frac{x^2+s^2}{2}}I_0(sx)\,\text{d}x$ is  the first-order Marcum $Q$-function \cite{Bocus2013CLapproximation} where $I_n(x) = (\frac{x}{2})^n \sum\limits_{i=0}^{\infty}\frac{(\frac{x}{2})^{2i} }{i!\Gamma(n+i+1)}$ is the $n$-th order modified Bessel function of the first kind.

\section{Analytical Results}\label{Sec. III}
Considering the data transmission with finite block-length codewords,  the error probability in a time slot is given by \cite{polyanskiy2010channel,Makki2014WCLfinite}
\begin{align}\label{eq_QL}
    \epsilon = Q\left(\frac{\sqrt{L}\left(\log\left(1+gP\right)-R\right)}{\sqrt{1-\frac{1}{\left(1+gP\right)^2}}}\right).
\end{align}
Here, $L$ in channel use is the codeword length, $P$ is the transmit power, and $Q(x) = \frac{1}{\sqrt{2\pi}}\int_x^{\infty}e^{-\frac{t^2}{2}}\text{d}t$ represents the Gaussian Q-function. Also, $R=\frac{K}{L}$ is the codeword length in \ac{npcu} with $K$ being the number of nats per codeword.

With the PA system, if we focus on one slot of the BS-RA channel with perfect knowledge  of the BS-PA channel $\hat{g}$ at the \ac{BS}, the instantaneous error probability is given by
\begin{align}\label{eq_Qi}
\epsilon =\underset{g|\hat{g}}{\mathbb{E}}\left[ Q\left(\frac{\sqrt{L}\left(\log\left(1+gP\right)-R(\hat{g})\right)}{\sqrt{1-\frac{1}{\left(1+gP\right)^2}}}\right)\right],
\end{align}
i.e., the transmission rate can be dynamically optimized based on the quality of the BS-PA channel. Also, the instantaneous throughput (in \ac{npcu}), defined as the average number of information nats successfully received by the RA for the given BS-PA channel realization $\hat g$, is given by
\begin{align}\label{eq_etaghat}
    \eta_{|\hat {g}}&=  R(\hat{g})\left(1-\epsilon\right).
\end{align}
Moreover, with a fixed $R$, the error probability is given by
\begin{align}\label{eq_Q_fixedR}
\epsilon =\underset{g}{\mathbb{E}}\left[ Q\left(\frac{\sqrt{L}\left(\log\left(1+gP\right)-R\right)}{\sqrt{1-\frac{1}{\left(1+gP\right)^2}}}\right)\right].
\end{align}

Let us define $\alpha = \frac{e^{R(\hat{g})}-1}{P}$, and 
\begin{align}\label{eq_mu}
   \mu &= -\frac{\partial \left(Q\left(\frac{\sqrt{L}\left(\log\left(1+xP\right)-R(\hat{g})\right)}{\sqrt{1-\frac{1}{\left(1+xP\right)^2}}}\right)\right)}{\partial x}\Bigg|_{x=\alpha} \nonumber\\
   &= \sqrt{\frac{LP^2}{2\pi\left(e^{2R}-1\right)}},
\end{align}
which is the derivative of the Gaussian Q-function in (\ref{eq_Qi}) at $x = \alpha$. Then, the Gaussian Q-function in  (\ref{eq_Qi}) can be linearly approximated as \cite[Eq. 14]{Makki2014WCLfinite}
\begin{align}\label{eq_linear}
&Q\left(\frac{\sqrt{L}\left(\log\left(1+gP\right)-R(\hat{g})\right)}{\sqrt{1-\frac{1}{\left(1+gP\right)^2}}}\right)\simeq \nonumber\\&
   \left\{ \begin{array}{rcl}
1 & \mbox{for} & x<\alpha-\frac{1}{2\mu}, \\ 
\frac{1}{2}-\mu(x-\alpha) & \mbox{for} & \alpha-\frac{1}{2\mu}\leq x \leq \alpha+\frac{1}{2\mu}, \\
0 & \mbox{for} & x>\alpha+\frac{1}{2\mu}.
\end{array}\right.
\end{align}
Our goal is to derive a closed-form expression for (\ref{eq_etaghat}) and (\ref{eq_Q_fixedR}) which allows for optimal rate allocation. For this reason, Theorems \ref{theorem1} and \ref{theorem2} use (\ref{eq_linear}) to approximate the conditional error probability (\ref{eq_Qi}) as follows.

\begin{theorem}\label{theorem1}
The error probability (\ref{eq_Qi}) for a given $\hat{g}$  can be  approximated by (\ref{eq_theorem1}).
\end{theorem}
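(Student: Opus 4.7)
The plan is to substitute the piecewise linear approximation (\ref{eq_linear}) of the Gaussian $Q$-function into the expectation (\ref{eq_Qi}) and decompose the integral according to the three regions determined by the breakpoints $\alpha \pm \tfrac{1}{2\mu}$. Writing
\begin{align*}
\epsilon \simeq \int_0^{\alpha-\frac{1}{2\mu}} f_{g|\hat g}(x)\,dx + \int_{\alpha-\frac{1}{2\mu}}^{\alpha+\frac{1}{2\mu}} \Big(\tfrac{1}{2}-\mu(x-\alpha)\Big) f_{g|\hat g}(x)\,dx,
\end{align*}
the upper tail contributes $0$ and the lower tail is immediately recognized as $F_{g|\hat g}(\alpha-\tfrac{1}{2\mu})$, for which the closed form (\ref{eq_cdf}) in terms of the first-order Marcum $Q$-function is directly available.

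Next I would split the middle integral into a \emph{constant} piece and a \emph{linear} piece. The constant piece $(\tfrac{1}{2}+\mu\alpha)\big[F_{g|\hat g}(\alpha+\tfrac{1}{2\mu}) - F_{g|\hat g}(\alpha-\tfrac{1}{2\mu})\big]$ again evaluates to a combination of Marcum $Q$-functions via (\ref{eq_cdf}). The remaining piece $-\mu\int_{\alpha-1/(2\mu)}^{\alpha+1/(2\mu)} x\, f_{g|\hat g}(x)\,dx$ is the core of the computation. I would handle it by substituting the series expansion $I_0(z)=\sum_{i\ge 0}(z/2)^{2i}/(i!)^2$ into (\ref{eq_pdf}) and integrating term by term; each term reduces to an incomplete gamma function evaluated between $(\alpha-\tfrac{1}{2\mu})/\sigma^2$ and $(\alpha+\tfrac{1}{2\mu})/\sigma^2$, and the resulting double series can be repackaged using the Marcum $Q$-series identity $\mathcal{Q}_1(a,b)=e^{-(a^2+b^2)/2}\sum_{k\ge 0}(a/b)^k I_k(ab)$, so that the whole contribution again collapses to a combination of $\mathcal{Q}_1(\cdot,\cdot)$ evaluated at the two breakpoints (plus a correction involving the endpoint values of $xf_{g|\hat g}(x)$ if one instead uses integration by parts).

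One small care point is the edge case $\alpha-\tfrac{1}{2\mu}<0$: the lower limit of the first integral must then be replaced by $0$, which has no effect on the final expression since $f_{g|\hat g}$ is supported on $[0,\infty)$ and the CDF satisfies $F_{g|\hat g}(0)=0$. A simple truncation of the breakpoint at $0$ handles this uniformly. Plugging the closed form for $\alpha$ and $\mu$ from the definition $\alpha=(e^{R(\hat g)}-1)/P$ and (\ref{eq_mu}) then yields the compact expression claimed in (\ref{eq_theorem1}).

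The main obstacle will be the evaluation of $\int x\,f_{g|\hat g}(x)\,dx$ over the finite interval $[\alpha-\tfrac{1}{2\mu},\alpha+\tfrac{1}{2\mu}]$: unlike the CDF-type terms, this moment-type integral against a non-central chi-squared density does not admit an off-the-shelf Marcum-$Q$ identity, so the key technical step is to show that the series-expansion/integration-by-parts rearrangement collapses it back into Marcum $Q$-functions (possibly of higher order, reducible to $\mathcal{Q}_1$) so that the whole approximation is expressible in closed form rather than as an infinite sum.
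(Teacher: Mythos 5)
Your proposal follows the paper's proof essentially step for step: substitute the semi-linear approximation (\ref{eq_linear}) into (\ref{eq_Qi}), identify the tail and constant pieces as Marcum-$Q$-based CDF terms via (\ref{eq_cdf}), expand $I_0$ in (\ref{eq_pdf}) as a (truncated) power series, and integrate term by term into incomplete gamma functions over $[\alpha-\tfrac{1}{2\mu},\alpha+\tfrac{1}{2\mu}]$. The only divergence is your closing paragraph: the ``main obstacle'' you identify does not arise, because the paper never collapses the moment integral back into Marcum $Q$-functions --- the stated result (\ref{eq_theorem1}) \emph{is} the finite truncated sum $\sum_{i=0}^{N}$ of differences $\Gamma\bigl(i+2,\tfrac{\alpha-\frac{1}{2\mu}}{\sigma^2}\bigr)-\Gamma\bigl(i+2,\tfrac{\alpha+\frac{1}{2\mu}}{\sigma^2}\bigr)$ that you derive in your middle step, accepted as the closed form.
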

\begin{proof} Plugging (\ref{eq_pdf}) into (\ref{eq_Qi}), we obtain
\begin{align}\label{eq_theorem1}
    \epsilon &= \int_0^\infty f_{g|\hat{g}}(x) Q\left(\frac{\sqrt{L}\left(\log\left(1+gP\right)-R(\hat{g})\right)}{\sqrt{1-\frac{1}{\left(1+gP\right)^2}}}\right) \text{d}x \nonumber\\
    &\overset{(a)}{\simeq} \int_0^{\alpha-\frac{1}{2\mu}} f_{g|\hat{g}}(x) \text{d}x + \int_{\alpha-\frac{1}{2\mu}}^{\alpha+\frac{1}{2\mu}} f_{g|\hat{g}}(x)\left(\frac{1}{2}-\mu(x-\alpha)\right) \text{d}x \nonumber\\
    &=  F_{g|\hat{g}}\left(\alpha-\frac{1}{2\mu}\right) +\nonumber\\
    &~~\left(\frac{1}{2}+\mu\alpha\right)\left(F_{g|\hat{g}}\left(\alpha+\frac{1}{2\mu}\right)-F_{g|\hat{g}}\left(\alpha-\frac{1}{2\mu}\right)\right)-\nonumber\\
    &~~\mu \int_{\alpha-\frac{1}{2\mu}}^{\alpha+\frac{1}{2\mu}} \frac{x}{\sigma^2}e^{-\frac{x+(1-\sigma^2)\hat{g}}{\sigma^2}}I_0\left(\frac{2\sqrt{x(1-\sigma^2)\hat{g}}}{\sigma^2}\right) \text{d}x \nonumber\\
    &\overset{(b)}{\simeq}  F_{g|\hat{g}}\left(\alpha-\frac{1}{2\mu}\right) +\nonumber\\
    &~~\left(\frac{1}{2}+\mu\alpha\right)\left(F_{g|\hat{g}}\left(\alpha+\frac{1}{2\mu}\right)-F_{g|\hat{g}}\left(\alpha-\frac{1}{2\mu}\right)\right)-\nonumber\\
    &~~\mu \frac{e^{-\frac{\left(1-\sigma^2\right)\hat{g}}{\sigma^2}}}{\sigma^2}\left(\sum_{i=0}^{N}\frac{\left(\left(1-\sigma^2\right)\hat{g}\right)^{i}}{\sigma^{4i}\left(i!\right)^2}\int_{\alpha-\frac{1}{2\mu}}^{\alpha+\frac{1}{2\mu}}x^{i+1}e^{-\frac{x}{\sigma^2}}\text{d}x\right)\nonumber\\
    &\overset{(c)}{\simeq}  F_{g|\hat{g}}\left(\alpha-\frac{1}{2\mu}\right) +\nonumber\\
    &~~\left(\frac{1}{2}+\mu\alpha\right)\left(F_{g|\hat{g}}\left(\alpha+\frac{1}{2\mu}\right)-F_{g|\hat{g}}\left(\alpha-\frac{1}{2\mu}\right)\right)-\nonumber\\
    &~~\mu \frac{e^{-\frac{\left(1-\sigma^2\right)\hat{g}}{\sigma^2}}}{\sigma^2}\sum_{i=0}^{N}\frac{\left(\left(1-\sigma^2\right)\hat{g}\right)^{i}}{\sigma^{2i-4}\left(i!\right)^2}\nonumber\\
    &~~\left(\Gamma\left(i+2, \frac{\alpha-\frac{1}{2\mu}}{\sigma^2}\right)-\Gamma\left(i+2, \frac{\alpha+\frac{1}{2\mu}}{\sigma^2}\right)\right).
\end{align}
Here, $(a)$ is obtained by (\ref{eq_linear}), and $(b)$ comes from the approximation $I_0(x) \simeq \sum_{i=0}^{N} \frac{\left(\frac{x^2}{4}\right)^{i}}{\left(i!\right)^2}, \forall N$. Finally, $(c)$ uses the definition of the incomplete Gamma function $\Gamma(s,x) = \int_x^{\infty} t^{s-1} e^{-t} \text{d}t$.
\end{proof}

\begin{theorem}\label{theorem2}
The error probability (\ref{eq_Qi}) for a given $\hat{g}$ can be approximated by (\ref{eq_theorem2}).
\end{theorem}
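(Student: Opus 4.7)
The plan is to follow the opening move of Theorem \ref{theorem1} verbatim, but then replace the Bessel-series treatment of the residual integral with a midpoint approximation that collapses the final answer to a symmetric average of two CDF evaluations, rather than an infinite series of incomplete Gamma functions.

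First, I substitute the piecewise-linear surrogate (\ref{eq_linear}) for the Gaussian $Q$-function together with the conditional PDF (\ref{eq_pdf}) into (\ref{eq_Qi}). The saturation region $[0,\alpha-\tfrac{1}{2\mu}]$ contributes $F_{g|\hat{g}}(\alpha-\tfrac{1}{2\mu})$, the tail region $[\alpha+\tfrac{1}{2\mu},\infty)$ contributes zero, and the inner linear region $[\alpha-\tfrac{1}{2\mu},\alpha+\tfrac{1}{2\mu}]$ contributes $(\tfrac{1}{2}+\mu\alpha)\Delta F-\mu\int_{\alpha-1/(2\mu)}^{\alpha+1/(2\mu)} x\,f_{g|\hat{g}}(x)\,\mathrm{d}x$, where I write $\Delta F \triangleq F_{g|\hat{g}}(\alpha+\tfrac{1}{2\mu})-F_{g|\hat{g}}(\alpha-\tfrac{1}{2\mu})$ for compactness.

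Second, I exploit the fact that the inner band has width $1/\mu=\sqrt{2\pi(e^{2R}-1)/(LP^{2})}$, which shrinks with $\sqrt{L}$. On such a narrow interval the slowly varying factor $x$ is well approximated by its midpoint $\alpha$, giving $\int_{\alpha-1/(2\mu)}^{\alpha+1/(2\mu)} x\,f_{g|\hat{g}}(x)\,\mathrm{d}x \simeq \alpha\,\Delta F$. Substituting this into the expression of the previous step, the $\mu\alpha\,\Delta F$ contributions cancel exactly, producing the clean midpoint-average $\epsilon \simeq \tfrac{1}{2}\!\left[F_{g|\hat{g}}(\alpha-\tfrac{1}{2\mu})+F_{g|\hat{g}}(\alpha+\tfrac{1}{2\mu})\right]$. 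I then apply the Marcum-$Q$ form (\ref{eq_cdf}) to each of the two CDF evaluations, which yields an expression of the type $1-\tfrac{1}{2}\!\left[\mathcal{Q}_{1}(a,b_{-})+\mathcal{Q}_{1}(a,b_{+})\right]$ with $a=\sqrt{2(1-\sigma^{2})\hat{g}/\sigma^{2}}$ and $b_{\pm}=\sqrt{2(\alpha\pm 1/(2\mu))/\sigma^{2}}$; this is the intended (\ref{eq_theorem2}).

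The main obstacle is justifying the midpoint replacement: one must argue that $f_{g|\hat{g}}$ varies slowly relative to $1/\mu$ so the suppressed correction is $O(1/\mu^{2})$ and therefore negligible compared with the retained $O(1/\mu)$ quantities. In the moderate-to-large $L$ regime this is legitimate because the Rician/non-central $\chi^{2}$ density is smooth on any positive interval; for very small $L$, or a very small $\sigma$ that sharpens $f_{g|\hat{g}}$ near its mean, the series-based Theorem~\ref{theorem1} should be preferred. A secondary subtlety is that one should take $\alpha-\tfrac{1}{2\mu}\geq 0$, which implicitly constrains the range of $R(\hat{g})$ where the approximation is trustworthy; outside this range the lower limit of the inner region should simply be clipped at $0$, which leaves the final formula intact.
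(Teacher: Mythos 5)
Your decomposition and the narrow-band treatment of the residual integral follow the paper's strategy, but you execute the final step differently and land on a different closed form. The paper integrates $\mu\int_{\alpha-1/(2\mu)}^{\alpha+1/(2\mu)} x f_{g|\hat{g}}(x)\,\mathrm{d}x$ by parts and then applies the first-order (midpoint) Riemann approximation to $\int F_{g|\hat{g}}(x)\,\mathrm{d}x$; after the cancellations everything collapses to the single evaluation $\epsilon\simeq F_{g|\hat{g}}(\alpha)$, which is exactly (\ref{eq_theorem2}). You instead freeze the factor $x$ at the midpoint $\alpha$, which cancels the $\mu\alpha\,\Delta F$ terms and leaves the symmetric average $\tfrac{1}{2}\left[F_{g|\hat{g}}\left(\alpha-\tfrac{1}{2\mu}\right)+F_{g|\hat{g}}\left(\alpha+\tfrac{1}{2\mu}\right)\right]$. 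The two answers agree to the order you are working at---by the same smoothness argument you invoke, $\tfrac{1}{2}\left[F(\alpha-h)+F(\alpha+h)\right]=F(\alpha)+O(h^{2})$---so your derivation is sound, but it stops one Taylor step short and therefore does not literally produce (\ref{eq_theorem2}). That last simplification matters operationally: $F_{g|\hat{g}}(\alpha)$ is independent of $L$ and coincides with the infinite-block-length outage probability, which is precisely what lets the paper reuse the Marcum-$Q$ approximation and the Lambert-W rate optimization in (\ref{eq_appR})--(\ref{eq_appRF}); your two-point average still carries $\mu$ and would not yield that reduction directly, so you should add the final step $\tfrac{1}{2}(F_{-}+F_{+})\simeq F(\alpha)$. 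Your remarks on the validity regime (smoothness of $f_{g|\hat{g}}$ relative to the band width $1/\mu$, and clipping the lower integration limit at $0$ when $\alpha<\tfrac{1}{2\mu}$) are correct and in fact more careful than the paper, which discusses neither point.
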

\begin{proof}
Using (\ref{eq_linear}), we have
\begin{align}\label{eq_theorem2}
    \epsilon &= \int_0^\infty f_{g|\hat{g}}(x) Q\left(\frac{\sqrt{L}\left(\log\left(1+gP\right)-R(\hat{g})\right)}{\sqrt{1-\frac{1}{\left(1+gP\right)^2}}}\right) \text{d}x \nonumber\\
    &\simeq F_{g|\hat{g}}\left(\alpha-\frac{1}{2\mu}\right) +\nonumber\\
    &~~\left(\frac{1}{2}+\mu\alpha\right)\left(F_{g|\hat{g}}\left(\alpha+\frac{1}{2\mu}\right)-F_{g|\hat{g}}\left(\alpha-\frac{1}{2\mu}\right)\right)-\nonumber\\
    &~~\mu \int_{\alpha-\frac{1}{2\mu}}^{\alpha+\frac{1}{2\mu}} xf_{g|\hat{g}}(x) \text{d}x \nonumber\\
    &\overset{(d)}{=}F_{g|\hat{g}}\left(\alpha-\frac{1}{2\mu}\right) +\nonumber\\
    &~~\left(\frac{1}{2}+\mu\alpha\right)\left(F_{g|\hat{g}}\left(\alpha+\frac{1}{2\mu}\right)-F_{g|\hat{g}}\left(\alpha-\frac{1}{2\mu}\right)\right)-\nonumber\\
    &~~\mu\left(F_{g|\hat{g}}(x)x|_{\alpha-\frac{1}{2\mu}}^{\alpha+\frac{1}{2\mu}}-\int_{\alpha-\frac{1}{2\mu}}^{\alpha+\frac{1}{2\mu}}F_{g|\hat{g}}(x)\text{d}x\right)\nonumber\\
    &\overset{(e)}{\simeq} F_{g|\hat{g}}\left(\alpha-\frac{1}{2\mu}\right) +\nonumber\\
    &~~\left(\frac{1}{2}+\mu\alpha\right)\left(F_{g|\hat{g}}\left(\alpha+\frac{1}{2\mu}\right)-F_{g|\hat{g}}\left(\alpha-\frac{1}{2\mu}\right)\right)-\nonumber\\
    &~~\mu\Bigg(\left(\alpha+\frac{1}{2\mu}\right)F_{g|\hat{g}}\left(\alpha+\frac{1}{2\mu}\right)-\nonumber\\&~~\left(\alpha-\frac{1}{2\mu}\right)F_{g|\hat{g}}\left(\alpha-\frac{1}{2\mu}\right)\nonumber\\&~~-\frac{1}{\mu}F_{g|\hat{g}}\left(\alpha-\frac{1}{2\mu}\right)+\frac{1}{\mu}F_{g|\hat{g}}\left(\alpha-\frac{1}{2\mu}\right)-\nonumber\\&~~\frac{1}{\mu}F_{g|\hat{g}}\left(\frac{1}{2}2\alpha\right)\Bigg)=F_{g|\hat{g}}(\alpha).
\end{align}
Here, following the same approach as in \cite[Eq. (34)]{makki2016TCwireless}, ($d$) is obtained by partial integration, and in ($e$) we use the first-order Riemann integral approximation $\int_{t_0}^{t_1}f(x)\text{d}x \simeq (t_1-t_0)f\left(\frac{t_0+t_1}{2}\right)$ and some manipulations.
\end{proof}

By Theorem \ref{theorem2}, the outage-limited throughput for a given $\hat{g}$ is simplified to
\begin{align}\label{eq_etagivenhatg}
    \eta|_{\hat{g}} &=  R(\hat{g})\left(1-F_{G|\hat{H}}\left(\frac{e^{R(\hat{g})}-1}{P}\right)\right)\nonumber\\
    & = R(\hat{g})\mathcal{Q}_1\left(\sqrt{\frac{2(1-\sigma^2)\hat{g}}{\sigma^2}},\sqrt{\frac{2(e^{R(\hat{g})}-1)}{P\sigma^2}}\right),
\end{align}
which is the same as the case with infinite block length. Hence, we can use the same method as in \cite[Eq. (14)]{Guo2019WCLrate} 
with the approximation of the first-order Marcum Q-function \cite[Eq. (2), (7)]{Bocus2013CLapproximation}
\begin{align}
    \mathcal{Q}_1 (s, \rho) &\simeq e^{\left(-e^{\mathcal{I}(s)}\rho^{\mathcal{J}(s)}\right)}, \nonumber\\
    \mathcal{I}(s)& = -0.840+0.327s-0.740s^2+0.083s^3-0.004s^4,\nonumber\\
    \mathcal{J}(s)& = 2.174-0.592s+0.593s^2-0.092s^3+0.005s^4,
\end{align}
the definition of the Lambert W function $xe^x = y \Leftrightarrow x = \mathcal{W}(y)$ \cite{corless1996lambertw},
and 
\begin{align}
    \omega(\sigma, \hat{g}) &= e^{\left(\mathcal{I}\left(\sqrt{\frac{2(1-\sigma^2)\hat{g}}{\sigma^2}}\right)\right)}\left(\frac{2}{P\sigma^2}\right)^{\frac{\mathcal{J}\left(\sqrt{\frac{2(1-\sigma^2)\hat{g}}{\sigma^2}}\right)}{2}}, \\
    \nu(\sigma, \hat{g}) &= \frac{\mathcal{J}\left(\sqrt{\frac{2(1-\sigma^2)\hat{g}}{\sigma^2}}\right)}{2}.
\end{align}
to approximate (\ref{eq_etagivenhatg}) as 
\begin{align}\label{eq_appR}
    \eta|_{\hat{g}} \simeq R(\hat{g})e^{\left(-\omega(\sigma, \hat{g})\left(e^{R(\hat{g})}-1\right)^{\nu(\sigma, \hat{g})}\right)}.
\end{align}
Then, setting the derivative of (\ref{eq_appR}) with respect to $R(\hat{g})$ equal to zero leads to the optimal instantaneous rate allocation given by 
\begin{align}\label{eq_appRF}
    R_{\text{opt}|\hat{g}} \simeq \frac{1}{\nu(\sigma, \hat{g})}\mathcal{W}\left(\frac{1}{\omega(\sigma, \hat{g})}\right).
\end{align}

Finally, the average throughput for all possible values of $\hat{g}$ is given by 
\begin{align}\label{eq_averageeta}
    \eta = \mathbb{E}_{\hat{g}}\left[R_{\text{opt}|\hat{g}}\mathcal{Q}_1\left(\sqrt{\frac{2(1-\sigma^2)\hat{g}}{\sigma^2}},\sqrt{\frac{2(e^{R_{\text{opt}|\hat{g}}}-1)}{P\sigma^2}}\right)\right].
\end{align}

In this way, one can use (\ref{eq_appRF}) to optimize the rate allocation based on imperfect \ac{CSIT}, caused by the spatial mismatch problem, and (\ref{eq_averageeta}) gives the average system performance. Finally,  with adaptive rate allocation, the average error probability  can be calculated by averaging (\ref{eq_Qi}) for all possible values of $\hat{g}$.

As benchmarks, in the following, we consider two extreme cases with no \ac{CSIT} and no spatial mismatch.

\subsection{No-\ac{CSIT} Scenario}

With no instantaneous CSIT and rate adaptation based on average Rayleigh-fading at the transmitter side,  the average throughput is given by
\begin{align}\label{eq_eta_no_csit_pre}
    \eta^{\text{no-CSIT}} =& R\left(1-\underset{g}{\mathbb{E}}\left[Q\left(\frac{\sqrt{L}\left(\log\left(1+gP\right)-R\right)}{\sqrt{1-\frac{1}{\left(1+gP\right)^2}}}\right)\right]\right)\nonumber\\
    =& R\left(1-\int_0^\infty e^{-x}Q\left(\frac{\sqrt{L}\left(\log\left(1+xP\right)-R\right)}{\sqrt{1-\frac{1}{\left(1+xP\right)^2}}}\right) \text{d}x \right)\nonumber\\
    &\overset{(h)}{\simeq} \mu R \left(e^{-\alpha+\frac{1}{2\mu}}-e^{-\alpha-\frac{1}{2\mu}}\right)\nonumber\\
    & \overset{(i)}{\simeq}\sqrt{\frac{LP^2}{2\pi}}e^{-R-\frac{e^{R}-1}{P}}Re^{\sqrt{\frac{e^{2R}\pi}{2LP^2}}}.
\end{align}
Here, $(h)$ uses the semi-linear approximation of the Gaussian Q-function (\ref{eq_linear}), and $(i)$ is obtained by $\mu\simeq\sqrt{\frac{LP^2}{2\pi}}e^{-R}$, and ignoring the small term $e^{-\alpha-\frac{1}{2\mu}}$ with some manipulations.

Then, setting the derivative of (\ref{eq_eta_no_csit_pre}) with respect to $R$ equal to zero, we obtain the optimal rate adaptation without CSIT as
\begin{align}\label{eq_R_no_csit}
    R^{\text{no-CSIT}}_{\text{opt}} &\overset{(j)}{\simeq} \underset{R}{\arg}\Bigg\{\left(P-2\sqrt{\frac{LP^2}{2\pi}}\right)Re^{R}-\nonumber\\&~~2\sqrt{\frac{LP^2}{2\pi}}PR+2\sqrt{\frac{LP^2}{2\pi}}P=0\Bigg\}\nonumber\\
    &=\underset{R}{\arg}\left\{\left(\sqrt{\frac{\pi}{2LP^2}}-\frac{1}{P}\right)Re^{R}-R+1 = 0\right\}\nonumber\\
    &\overset{(k)}{\simeq} \mathcal{W}\left(\frac{P}{1-\sqrt{\frac{L}{2\pi}}}\right).
\end{align}
Here, $(j)$ is obtained by $e^{R}-1\simeq e^{R}$ for moderate/large \acp{SNR}, and in $(k)$ we omit the second term in the left hand of the equation. Also, $\mathcal{W}(\cdot)$ is the Lambert W function.

\subsection{No Spatial Mismatch Scenario}
In a genie-aided case with no spatial mismatch, we update the transmit rate in each time slot and target on a fixed instantaneous error probability as
\begin{align}
  Q\left(\frac{\sqrt{L}\left(\log\left(1+gP\right)-R\right)}{\sqrt{1-\frac{1}{\left(1+gP\right)^2}}}\right) = \hat{\epsilon}.
\end{align}
In this way, denoting the inverse Gaussian Q-function by $Q^{-1}(\cdot)$, the instantaneous transmit rate with fixed error probability is given by
\begin{align}
    R_{\text{ins}} = \log(1+gP) - \frac{Q^{-1}(\hat{\epsilon})}{\sqrt{L}}\sqrt{1-\frac{1}{(1+gP)^2}}.
\end{align}
Then, the expected throughput is calculated as
\begin{align}\label{eq_eta_fullcsit_pre}
    \eta_{\text{no spatial mismatch}} = \mathbb{E}_{g} \left[R_{\text{ins}}\right](1-\hat{\epsilon}).
\end{align}

Under Rayleigh-fading conditions, (\ref{eq_eta_fullcsit_pre}) can be calculated as \cite[Section IV.B]{makki2015TCfinite}
\begin{align}\label{eq_eta_fullcsit_pre2}
    \eta_{\text{no spatial mismatch}} = \left(R^{L\rightarrow\infty}-\frac{Q^{-1}(\hat{\epsilon})}{\sqrt{L}}\zeta\right)(1-\hat{\epsilon}),
\end{align}
with 
\begin{align}
   R^{L\rightarrow\infty} = \int_0^{\infty} e^{-x}\log(1+Px)\text{d}x = e^{\frac{1}{P}}\operatorname{E_1}\left(\frac{1}{P}\right), 
\end{align}
which is equal to the expected achievable rate under the assumption $L\rightarrow\infty$, and  $\operatorname{E_1}(x) = \int_x^{\infty} \frac{e^{-t}}{t} \mathrm{d}t$ denotes the Exponential Integral function. Also,
\begin{align}
    \zeta = \int_0^{\infty} e^{-x} \sqrt{{1-\frac{1}{(1+Px)^2}}}\text{d}x
\end{align}
represents the finite block-length penalty term and can be calculated numerically.

In this way, the optimal rate allocation with perfect \ac{CSIT} is derived from
\begin{align}
     \hat{\epsilon} &\overset{(l)}{=} \underset{x}{\argmax}\left\{\log\left(1-Q(x)\right)+\log\left(1-\frac{\zeta x}{\sqrt{L}R^{L\rightarrow\infty}}\right)\right\}\nonumber\\&\overset{(m)}{\simeq}\underset{x}{\argmax}\left\{Q(x)+\frac{\zeta x}{\sqrt{L}R^{L\rightarrow\infty}}\right\}\nonumber\\
     &\overset{(n)}{=} \underset{x}{\arg} \left\{\frac{-e^{-\frac{x^2}{2}}}{\sqrt{2\pi}}+\frac{\zeta x}{\sqrt{L}}R^{L\rightarrow\infty} = 0\right\}\nonumber\\
     &\overset{(o)}{=} Q\left(\sqrt{-2\log\left(\frac{\sqrt{2\pi}\zeta}{\sqrt{L}R^{L\rightarrow\infty}}\right)}\right).
\end{align}
Here, $(l)$ uses variable transform $x = Q^{-1}(\hat{\epsilon})$ and takes the logarithm of (\ref{eq_eta_fullcsit_pre2}). Then, $(m)$ uses approximation $\log(1-x)\simeq x$ for small values of $x$ and $(n)$ is obtained by setting the derivative of the previous step with respect to $x$ equal to zero. Finally, $(o)$ is obtained by variable transform $x = Q^{-1}(\hat{\epsilon})$ and some manipulations.

In this way, the optimal throughput under full-CSIT is given by
\begin{align}\label{eq_eta_fullcsit}
    \eta_{\text{no spatial mismatch}} = \left(R^{L\rightarrow\infty}-\frac{\sqrt{-2\log\left(\frac{\sqrt{2\pi}\zeta}{\sqrt{L}R^{L\rightarrow\infty}}\right)}}{\sqrt{L}}\zeta\right)\nonumber\\
    \left(1-Q\left(\sqrt{-2\log\left(\frac{\sqrt{2\pi}\zeta}{\sqrt{L}R^{L\rightarrow\infty}}\right)}\right)\right).
\end{align}
Finally, note that the cases with no spatial mismatch is an ultimate upper bound of the system performance, as in practise there will always be some spatial mismatch due to, e.g., speed variation.

% \section{Potential Extensions}

\section{{Simulation Results}}
\begin{figure}
    \centering
    \includegraphics[width = 1.0\columnwidth]{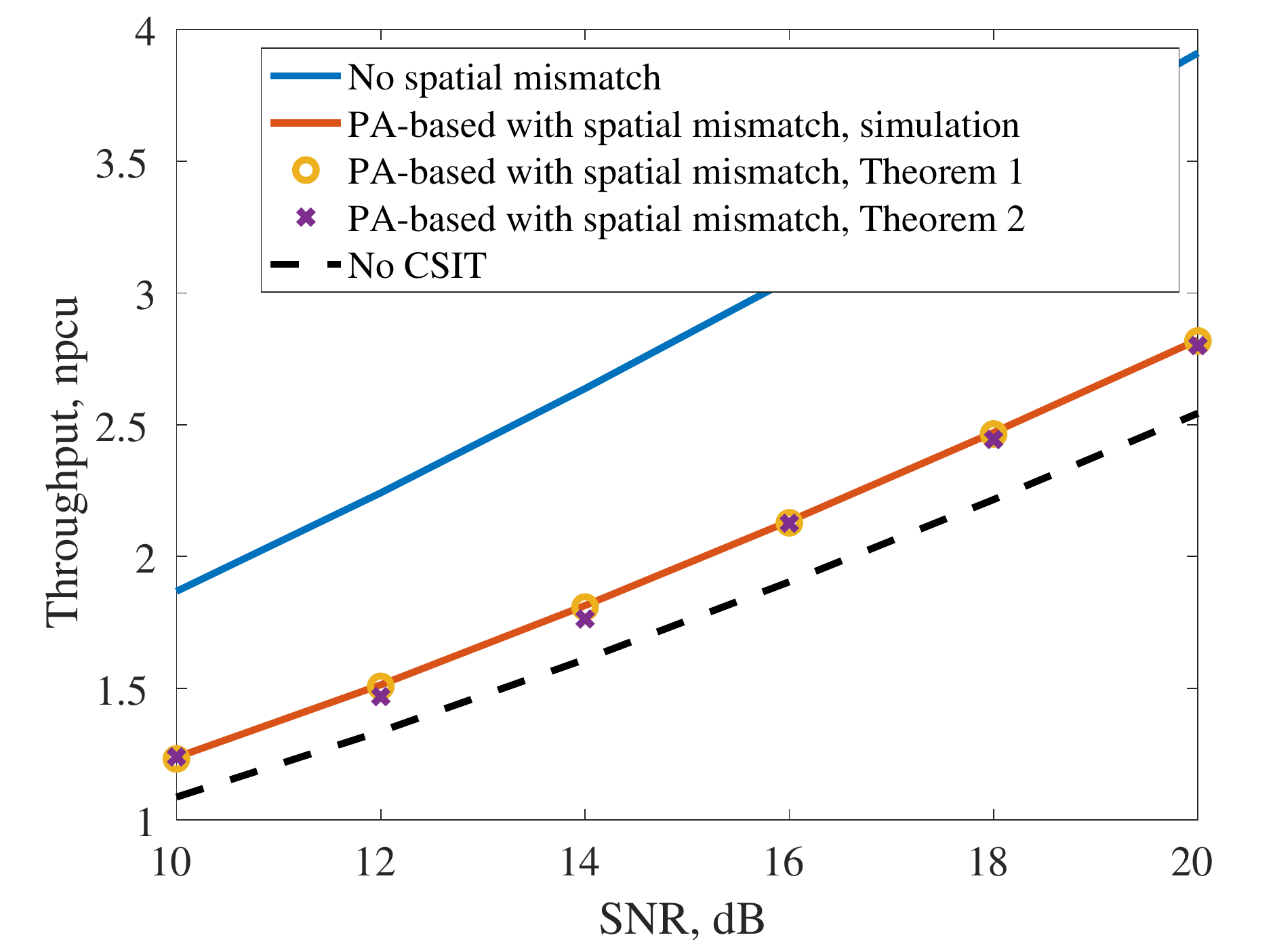}
    \caption{Optimal throughput (\ref{eq_averageeta}) as a function of \ac{SNR} for different setups with $\sigma = 0.5$, $L = 300$.}
    \label{fig2}
\end{figure}

\begin{figure}
    \centering
    \includegraphics[width = 1.0\columnwidth]{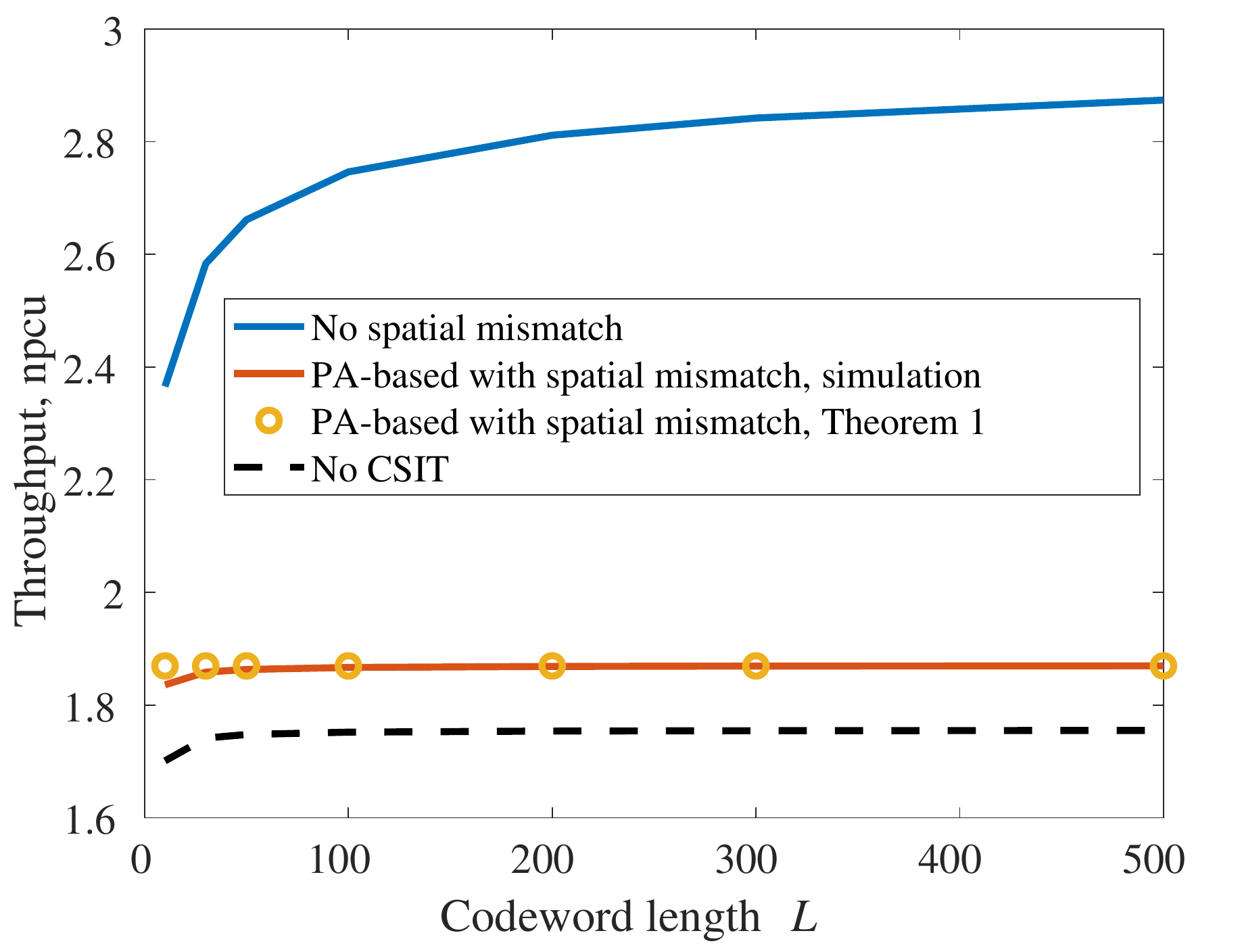}
    \caption{Optimal throughput (\ref{eq_averageeta}) as a function of codeword length $L$ with $\sigma = 0.6$, \ac{SNR} $ = 15$ dB.}
    \label{fig3}
\end{figure}

\begin{figure}
    \centering
    \includegraphics[width = 1.0\columnwidth]{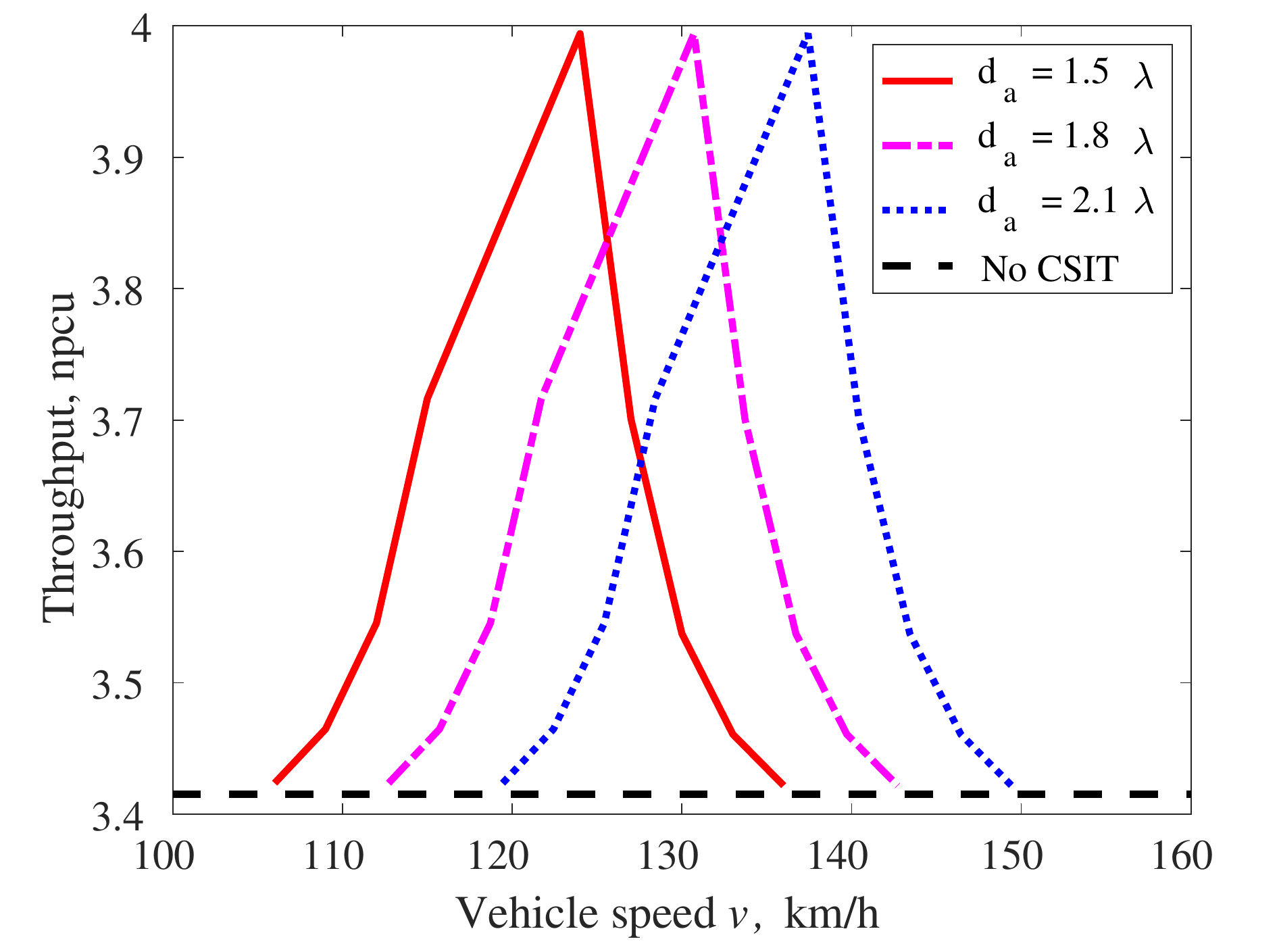}
    \caption{Optimal throughput (\ref{eq_averageeta}) as a function of the vehicle speed $v$ and antenna separation $d_\text{a}$. Here, we set $ \delta$ = 5 ms, carrier frequency $f_\text{c}$ = 2.68 GHz. Also, \ac{SNR} = 25 dB and $L = 300$ channel use. }
    \label{fig5}
\end{figure}
\begin{figure}
    \centering
    \includegraphics[width = 1.0\columnwidth]{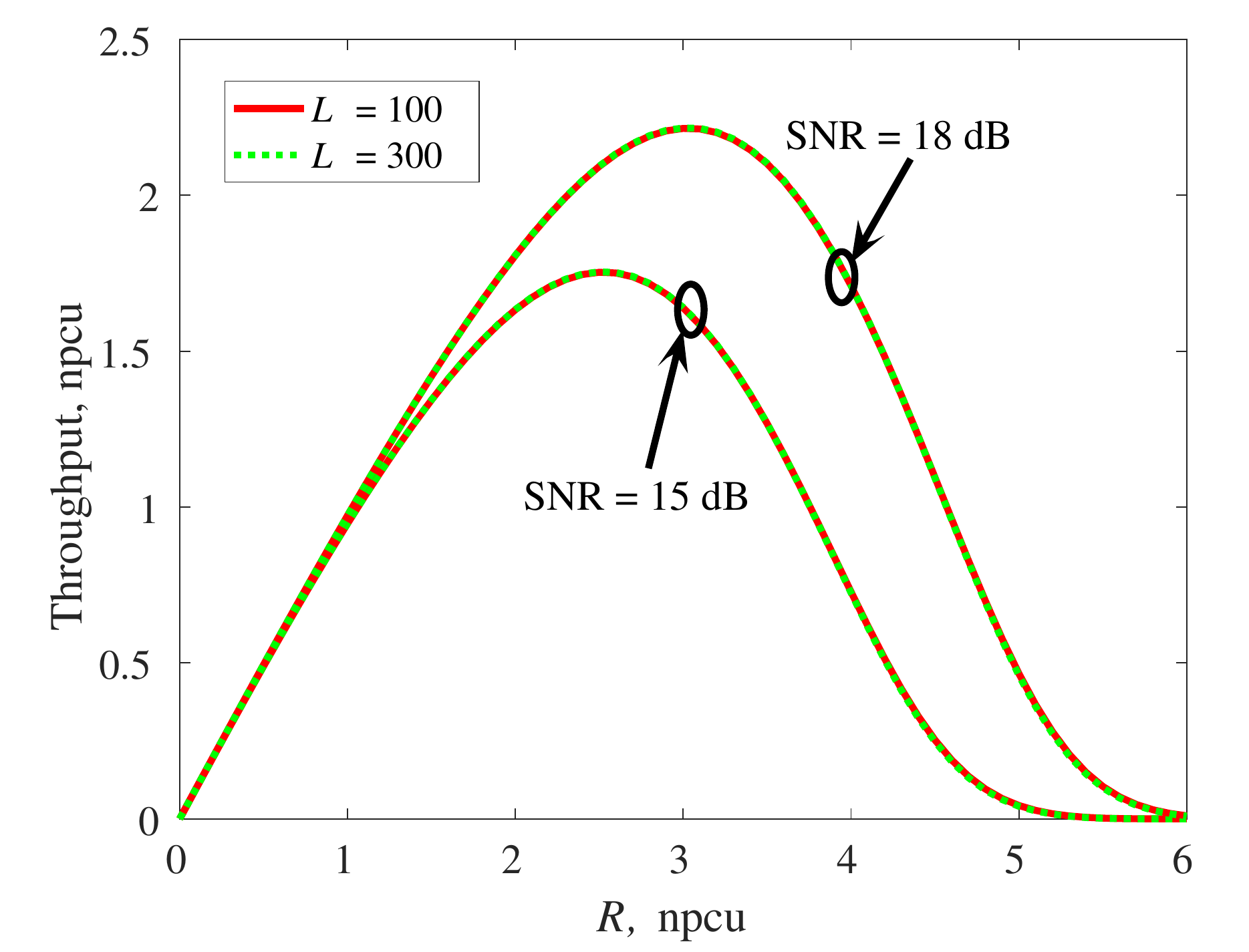}
    \caption{Throughput (\ref{eq_averageeta}) with fixed transmit rate. Here, we set \ac{SNR} = 15, 18 dB, $\sigma =  0.5$, and $L = $100, 300 channel use.}
    \label{fig6}
\end{figure}
In the simulations, we study the performance of rate adaptation in the \ac{PA} systems with spatial mismatch considering finite block-length codewords. The optimal throughput (\ref{eq_averageeta}) is from both Theorems \ref{theorem1}-\ref{theorem2}. For Theorem \ref{theorem1}, we find optimal throughput with (\ref{eq_theorem1}) for given $\hat{g}$ and average over all possible values of  $\hat{g}$, while for Theorem \ref{theorem2} we use the optimal transmit rate from (\ref{eq_appRF}) for a given $\hat{g}$ and then average the performance over different values of $\hat{g}$. Also, the cases with no \ac{CSIT} and no spatial mismatch are given by  (\ref{eq_eta_no_csit_pre}) and (\ref{eq_eta_fullcsit}), respectively. 

Setting $\sigma$ = 0.5, $L = 300$ channel use, Fig. \ref{fig2} studies the optimal throughput (\ref{eq_averageeta}) as a function of the \ac{SNR} for the cases with no spatial mismatch, \ac{PA}-based partial \ac{CSIT} due to spatial mismatch, and no \ac{CSIT}. Here, as the noise variance is set to 1, we define the \ac{SNR}, in dB, as $10\log_{10}P.$ Then, in Fig. \ref{fig3}, we study the average throughput as a function of the codeword length $L$ with $\sigma = 0.6$ and \ac{SNR} = 15 dB. Then, in Fig. \ref{fig5} we study the sensitivity of the optimal throughput with speed variation for $ \delta$ = 5 ms, carrier frequency $f_\text{c}  = 2.68$ GHz, \ac{SNR} = 25 dB and $L = 300$ channel use. In Fig. \ref{fig6},  we show the throughput as a function of the transmit rate $R$ (without rate adaptation) for different \acp{SNR} (SNR = 15, 18 dB) and codeword lengths ($L = 100, 300$ channel use)  and $\sigma = 0.5$.

In order to study the system performance in terms of error probability, in Fig. \ref{fig8} we present the error probability as a function of the transmit rate $R$ (without rate adaptation) for different \acp{SNR} (SNR = 15, 18 dB) and codeword lengths ($L = 100, 300$ channel use)  with $\sigma = 0.3$.  To evaluate the effect of the codeword length on the error probability (\ref{eq_QL}), in Fig. \ref{fig4} we present the error probability as a function of the codeword length $L$ for the cases with $\sigma = 0.4$ and \ac{SNR}$ = 15$ dB.  Finally, in Fig. \ref{fig7} the average error probability is presented as a function of the vehicle speed and the \ac{SNR}. Here, we set $L = 300$ channel use, and \ac{SNR} = 15, 25 dB.

\begin{figure}
    \centering
    \includegraphics[width = 1.0\columnwidth]{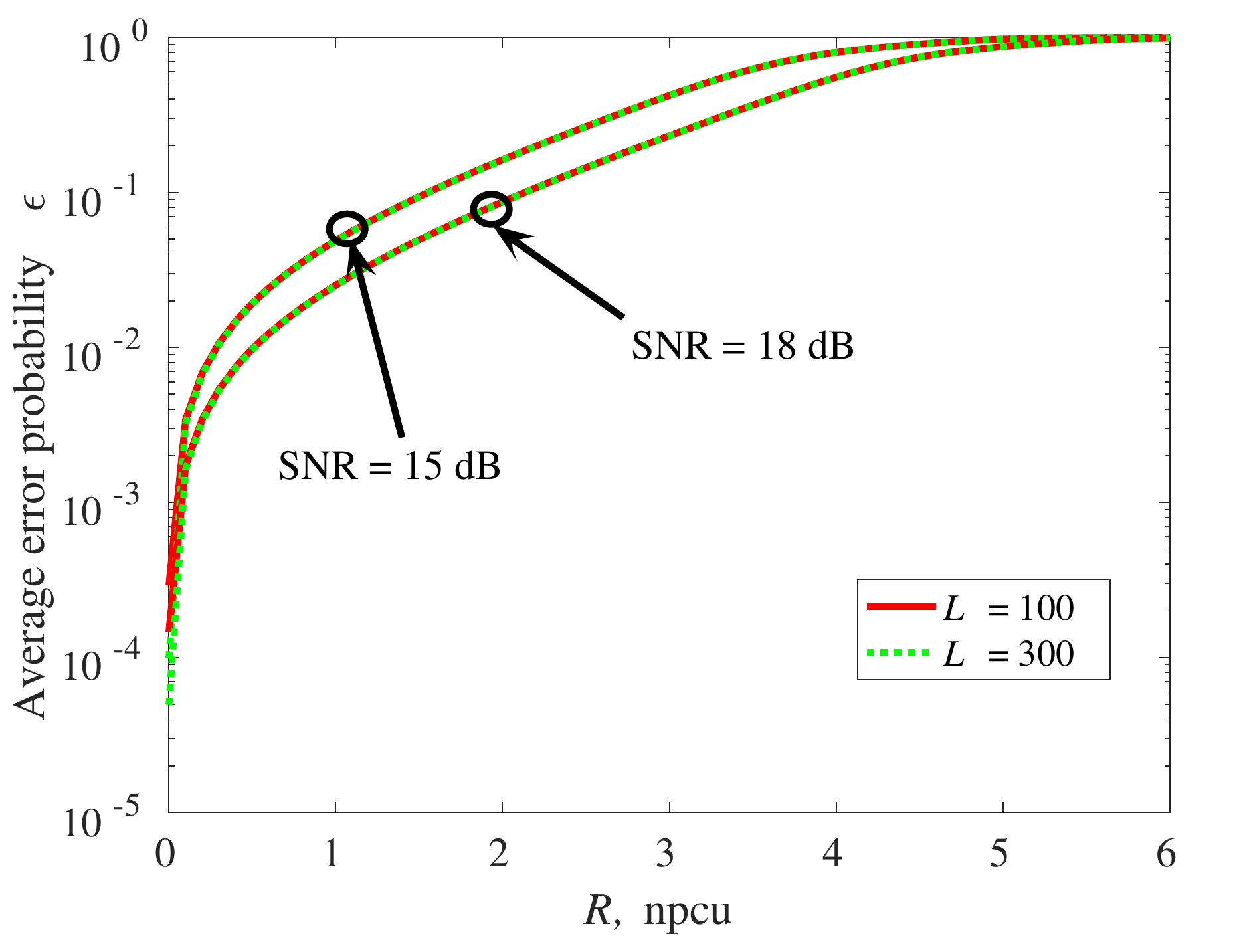}
    \caption{Average error probability with fixed transmit rate. Here, we set \ac{SNR} = 15, 18 dB, $\sigma = 0.3$, and $L = $100, 300 channel use.}
    \label{fig8}
\end{figure}
\begin{figure}
    \centering
    \includegraphics[width = 1.0\columnwidth]{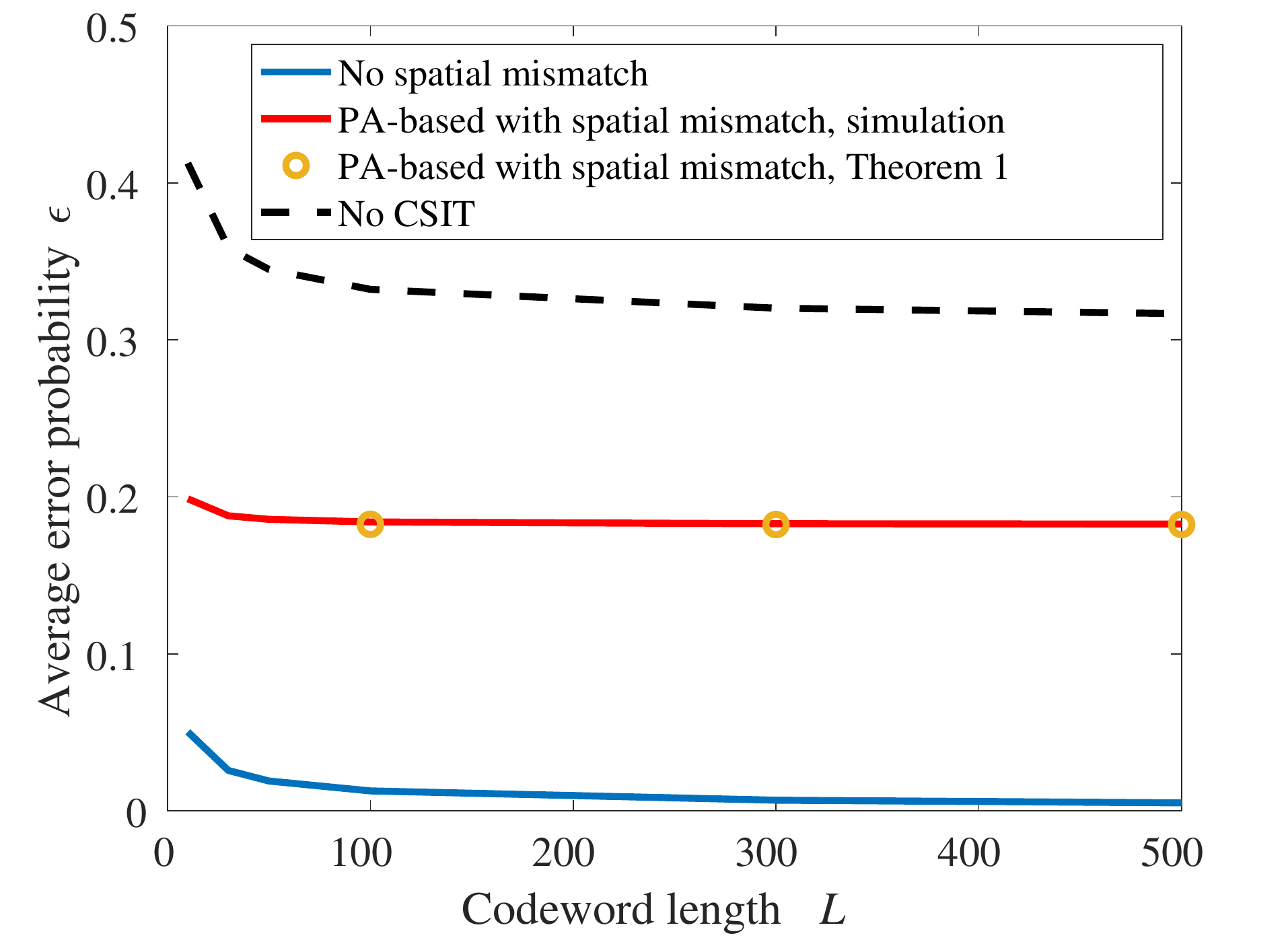}
    \caption{Average error probability with adaptive rate allocation. $\sigma = 0.4$, \ac{SNR}$ = 15$ dB.}
    \label{fig4}
\end{figure}
\begin{figure}
    \centering
    \includegraphics[width = 1.0\columnwidth]{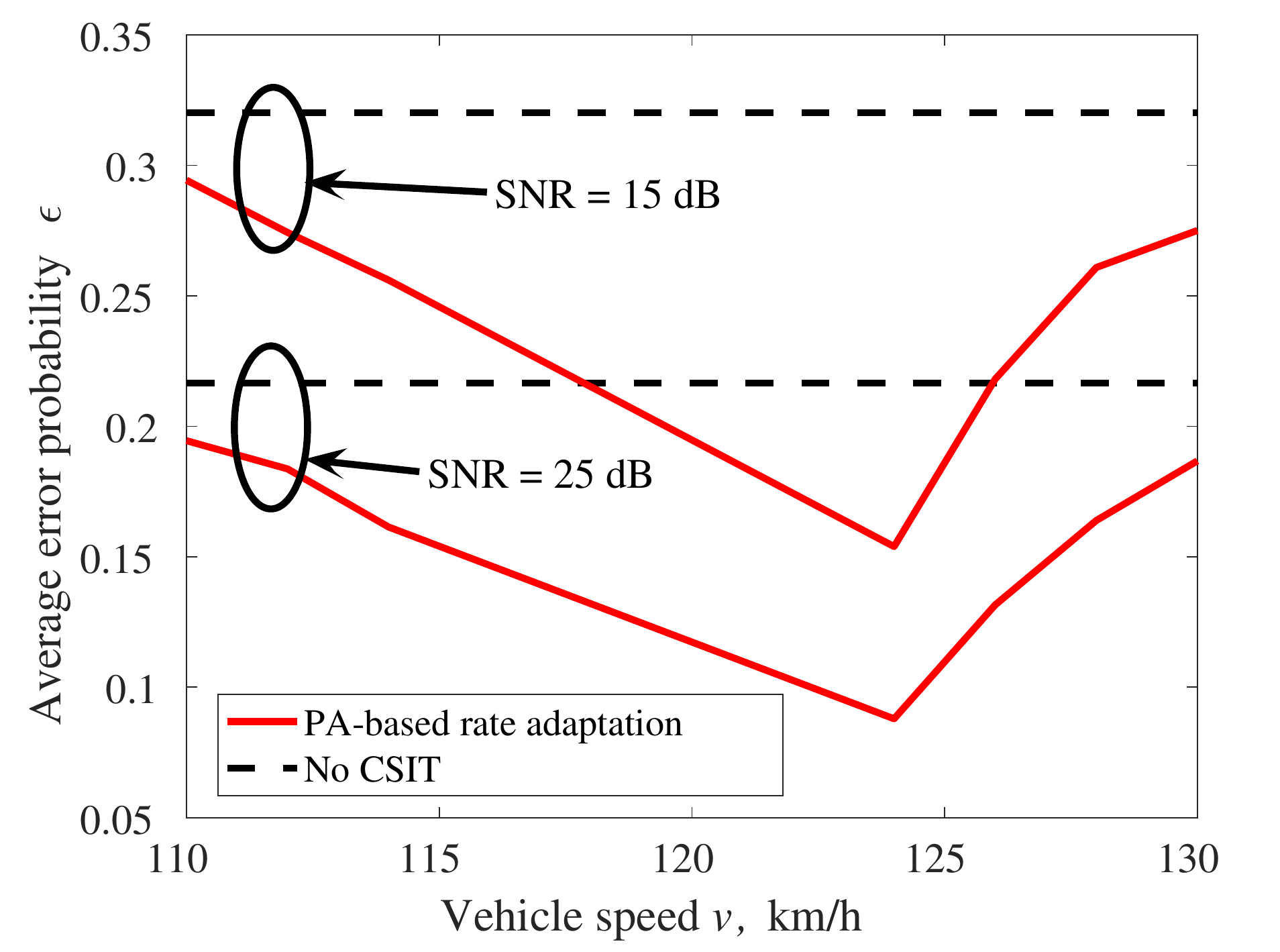}
    \caption{Average error probability with adaptive rate allocation. Here, we set \ac{SNR} = 15, 25 dB,  $ \delta$ = 5 ms, $f_\text{c}$ = 2.68 GHz, and $L = $300 channel use.}
    \label{fig7}
\end{figure}

According to the figures, the following conclusions can be drawn:
\begin{itemize}
    \item For a broad range of parameter settings, the approximation results of Theorems \ref{theorem1} and \ref{theorem2} give an accurate approximation of the error probability and the throughput (Figs. \ref{fig2}, \ref{fig3}, \ref{fig4}). Thus, they can be well utilized to evaluate the delay-limited performance of \ac{PA} setups.
    \item Rate adaptation can remarkably improve the performance of the \ac{PA} system with spatial mismatch in terms of throughput and error probability, specially in the cases with moderate/high \acp{SNR} (Fig. \ref{fig2}), long codeword lengths (Fig. \ref{fig3}, \ref{fig4}), as well as small $\sigma$'s (Figs. \ref{fig5}, \ref{fig7}).
    \item As shown in Fig. \ref{fig6}, for different values of codeword lengths/\acp{SNR}, there is an optimal value for the transmission rate maximizing the throughput. This is intuitively because with a low transmission rate the message is (almost) always correctly decoded but there are few information nats received by the receiver. With a high rate, on the other hand, there is a high probability of unsuccessful decoding, as shown in Fig. \ref{fig8}. Thus, the throughput tends to zero at low and high transmission rates, and there is a finite value of the transmission rate optimizing the throughput.
    \item The effect of the codeword length $L$ on the throughput and the error probability is relatively small according to Figs. \ref{fig3}, \ref{fig6} and \ref{fig4}, especially when the codeword length is in the order of several hundred symbols, which are the regions of interest in vehicle communications \cite{bilstrup2008evaluation}. In this way, unless for the cases with very short codewords, the Shannon's capacity formula can be well applied to study the system performance in the cases with moderate/long codeword length.
    \item As seen in Figs. \ref{fig5} and \ref{fig7}, the performance of the \ac{PA} system, in terms of throughput and error probability, is sensitive to the vehicle speed $v$, which leads to different levels of spatial mismatch for given $d_\text{a}$, $f_\text{c}$ and $\delta$, according to (\ref{eq_d}). As shown in Fig. \ref{fig5}, the throughput increases with higher correlations of the \ac{BS}-\ac{PA} and \ac{BS}-\ac{RA} channels, i.e., smaller $\sigma$, compared to the cases with no \ac{CSIT}. Also, the optimal speed in terms of maximum throughput for given $\delta$ changes with different antenna separations. The similar behaviour can be observed in Fig. \ref{fig7} as well. Also, it is interesting that for a range of speeds,  \ac{PA} with 10 dB less \ac{SNR} leads to better error probability compared to open-loop  setup.  Finally, while the error probability with rate adaptation improves compared to the case with no CSIT, it is sensitive to speed as well as \ac{SNR}. In this way, the \ac{PA} setup not only boosts the throughput, but it also increases the reliability.
\end{itemize}

\section{Conclusion}
We studied the performance of the \ac{PA} systems in the presence of spatial mismatch and  finite block-length codewords. The simulation and analytical results show that, while adaptive rate allocation improves the system performance in terms of throughput and error probability considerably, the effectiveness is affected by the \ac{SNR} and the vehicle speed. but less sensitive to the codeword length especially at the region of modern vehicle communications. However, the sensitivity of the system performance to the codeword length is negligible in the cases with codewords of moderate/long length which is the range of interest in vehicle communication systems. 

\section*{Acknowledgement}
This work was supported by VINNOVA (Swedish Government Agency for Innovation Systems) within the VINN Excellence Center ChaseOn.

\bibliographystyle{IEEEtran}

\bibliography{main.bib}

\end{document}